\newtheorem{theorem}{Theorem}
\begin{document}
\title{Mobile Conductance and Gossip-based Information Spreading in Mobile Networks}

\author{\authorblockN{Huazi~Zhang$^{\dag\ddag}$, Zhaoyang~Zhang$^{\dag}$, Huaiyu~Dai$^{\ddag}$
\\$\dag$. Dept.of Information Science and Electronic Engineering, Zhejiang University, China.
\\$\ddag$. Department of Electrical and Computer Engineering, North Carolina State University, USA
\\Email: hzhang17@ncsu.edu, ning\_ming@zju.edu.cn, huaiyu\_dai@ncsu.edu}
\thanks{This work was supported in part by the National Key Basic Research Program of China (No. 2012CB316104), the National Hi-Tech R\&D Programm of China (No. 2012AA121605), the Zhejiang Provincial Natural Science Foundation of China (No. LR12F01002), the Supporting Program for New Century Excellent Talents in University (NCET-09-0701), and the US National Science Foundation under Grants ECCS-1002258 and CNS-1016260.}}
\maketitle

\begin{abstract}
In this paper, we propose a general analytical framework for information spreading in mobile networks based on a new performance metric, mobile conductance, which allows us to separate the details of mobility models from the study of mobile spreading time. We derive a general result for the information spreading time in mobile networks in terms of  this new metric, and instantiate it through several popular mobility models. Large scale network simulation is conducted to verify our analysis.
\end{abstract}

\section{Introduction}\label{intro}
\noindent Information spreading and sharing becomes an increasingly important application in current and emerging networks, more and more through mobile devices and often in large scales. Recently some interesting analytical results for information spreading in dynamic wireless networks have started to emerge (see \cite{ZKongMobiHoc2008,Spreading-MEG,Velocity-mobility,Gossip-mobility,Zhang-WCSP2012-nonrelay} and references therein). An observation is that, most existing analytical works focus on specific mobility models, in particular random-walk like mobility models. It is thus desirable to develop a more general analytical framework for information dissemination in mobile networks which can address different types of mobility patterns in a unified manner.

Information dissemination in static networks has been well studied in literature (see \cite{Gossip} and references therein), where an important result is that the spreading time is essentially determined by a graph expansion property, conductance, of the underlying network topology.
Conductance represents the bottleneck for information exchange in a static network, and this motivates us to explore its counterpart in mobile networks.
The main contributions of this paper are summarized below.

\begin{enumerate}
\item
Based on a ``move-and-gossip" information spreading model (Section \ref{problem-formulation}), we propose a new metric, mobile conductance, to represent the capability of a mobile network to conduct information flows (Section \ref{def}). Mobile conductance is dependent not only on the network structure, but also on the mobility patterns. Facilitated by the definition of mobile conductance, a general result on the mobile spreading time is derived for a class of mobile networks modeled as a stationary Markovian evolving graph (Section \ref{main}).

\item
We evaluate the mobile conductance for several widely adopted mobility models, as summarized in Table.~\ref{result-table} \footnote{We follow the standard notations. Given non-negative functions $f(n)$ and $g(n)$: $f(n) = O(g(n))$ if there exists a positive constant $c_1$ and an integer $k_1$ such that $f(n) \leq c_1 g(n)$ for all $n \geq k_1$;
$f(n) = \Omega(g(n))$ if there exists a positive constant $c_2$ and an integer $k_2$ such that $f(n) \geq c_2 g(n)$ for all $n \geq k_2$;
$f(n) = \Theta(g(n))$ if both $f(n) = O(g(n))$ and $f(n) = \Omega(g(n))$ hold.}. In particular, the study on the fully random mobility model reveals that the potential improvement in information spreading time due to mobility is dramatic: from $\Theta \left( {\sqrt n } \right)$ to $\Theta \left( \log n \right)$ (Section \ref{app}). We have also carried out large scale simulations to verify our analysis (Section \ref{simu}).
\end{enumerate}

\begin{table}[!t]
\renewcommand{\arraystretch}{1.1}
\caption{Conductance of Different Mobility Models} \label{result-table} \centering
\begin{tabular}{|c|c|}
\hline
Static Conductance & ${\Phi _s} = \Theta \left( {\sqrt {\frac{{\log n}}{n}} } \right).$ \\
\hline\hline
Mobility Model & Mobile Conductance ${\Phi_m}$ \\
\hline
Fully Random & $\Theta \left( 1 \right)$ \\
\hline
Partially Random & ${\left( {\frac{{n - k}}{n}} \right)^2}{\Phi _s} + \frac{{k\left( {2n - k} \right)}}{{2{n^2}}}$ \\
\hline
Velocity Constrained & $\Theta \left( \max \left( v_{\max},r \right) \right) $ \\
\hline
Area Constrained (1-d) & $\frac{n_V^2+n_H^2}{n^2}{\Phi _s} + {\frac{{{n_V}{n_H}}}{{n^2}}}$ \\
\hline
Area Constrained (2-d) & $\Theta \left( \max \left( r_c,r \right) \right) $ \\
\hline
\end{tabular}
\end{table}

\section{Problem Formulation}\label{problem-formulation}

\subsection{System Model}

\noindent We consider an $n$-node mobile network on a unit square $\Omega$, modeled as a time-varying graph $G_{t}\triangleq(V,E_t)$ evolving over discrete time steps. The set of nodes $V$ are identified by the first $n$ positive integers $[n]$. One key difference between a mobile network and its static counterpart is that, the locations of nodes change over time according to certain mobility models, and so do the connections between the nodes represented by the edge set $E_t$. The classic broadcast problem is investigated: one arbitrary node $s$ holds a message at the beginning, which is spread to the whole network through a randomized gossip algorithm. During the gossip process, it is assumed that each node \emph{independently} contacts one of its neighbors uniformly at random, and during each meaningful contact (where at least one node has the piece of information), the message is successfully delivered in either direction (through the ``push" or ``pull" operation) \cite{Gossip}.

\vspace{-0mm}
\begin{figure}[t] \centering
\includegraphics[width=0.40\textwidth]{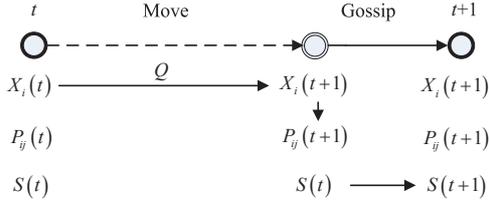}
\caption {Move-and-Gossip Spreading Model}
\label{Move-and-Gossip}
\end{figure}
\vspace{-0mm}

In contrast to the static case, there is an additional moving process mixed with the gossip process. In this study, we adopt a move-and-gossip model
as shown in Fig. \ref{Move-and-Gossip} to describe information spreading in a mobile network and facilitate our analysis. Specifically, each time slot is decomposed into two phases: each node first \emph{moves} independently according to some mobility model, and then \emph{gossips} with one of its \emph{new} neighbors. $X_i(t)$ denotes the position of node $i$, while $S(t)$ denotes the set of informed nodes (with $S\left( 0 \right) = \{s\}$), at the \emph{beginning} of time slot $t$. Note that $X_i(t)$ changes at the middle of each time slot (after the move step), while $S(t)$ is not updated till the end (after the gossip step). $P_{ij}(t+1)$ is used to denote the probability that node $i$ contacts one of its \emph{new} neighbors $j \in {\cal N}_i(t+1)$ in the gossip step of slot $t$; for a natural randomized gossip, it is set as $1/|{\cal N}_i(t+1)|$ for $j \in {\cal N}_i(t+1)$, and $0$ otherwise.

It is assumed that the moving processes of all nodes $\{X_i(t), t \in \mathrm{\mathbf{N}}\}$, $i \in [n]$, are independent stationary Markov chains, each starting from its stationary distribution with the transition distribution $q_i$, and collectively denoted by $\{\mathbf{X}(t), t \in \mathrm{\mathbf{N}}\}$ with the joint transition distribution $Q=\prod\limits_{i = 1}^n {{q_i}}$. While not necessary, we assume the celebrated random geometric graph (RGG) model \cite{RGG-book} for the initial node distributions for concreteness (particularly in Section \ref{app}), i.e., $G_{0}=G(n,r)$, where $r$ is the common transmission range, and all nodes are uniformly distributed. Under most existing random mobility models\cite{Mobility-throughput,ZKongMobiHoc2008,Spreading-MEG,Velocity-mobility,Gossip-mobility}, nodes will maintain the uniform distribution on the state space $\Omega$ over the time. The speed of node $i$ at time $t$ is defined by $v_i(t)=|X_i(t+1)-X_i(t)|$, assumed upper bounded by $v_{max}$ for all $i$ and $t$. We also assume that the network graph remains \emph{connected under mobility}; for RGG this implies $v_{\max}+r=\Omega(\sqrt{\log n/n})$\footnote{This requirement is already a relaxation as compared to $r=\Omega(\sqrt{\log n/n}))$ demanded for static networks. Actually our result only requires $E_{Q} \left[ {{N_{S'}}\left( {t + 1} \right)} \right]>0$; see (\ref{mobile-conductance-approx}).}.

\subsection{Mobility Model}\label{mobility-model}
\noindent The following mobility models are considered in this study:

Fully Random Mobility \cite{Mobility-throughput}:  $X_i(t)$ is uniformly distributed on $\Omega$ and i.i.d. over time. In this case, $v_{\max}=O\left(1\right)$. This idealistic model is often adopted to explore the largest possible improvement brought about by mobility.

Partially Random Mobility: $k$ randomly chosen nodes are mobile, following the fully random mobility
model, while the rest $n-k$ nodes stay static. This is one generalization of the fully random mobility model.

Velocity Constrained Mobility \cite{Velocity-mobility, Spreading-MEG}: This is another generalization of the fully random mobility model, with an arbitrary $v_{\max}$. In this case, $X_i(t+1)$ is uniformly distributed in the circle centered at $X_i(t)$ with radius $v_{\max}$.

One-dimensional Area Constrained Mobility \cite{Gossip-mobility}: In this model among the $n$ nodes, $n_V$ nodes only move vertically (V-nodes) and $n_H$ nodes only move horizontally (H-nodes). It is assumed that both V-nodes and H-nodes are uniformly and randomly distributed on $\Omega$, and the the mobility pattern of each node is ``fully random" on the corresponding one-dimensional path.

Two-dimensional Area Constrained Mobility\cite{ZKongMobiHoc2008,Two-dim-mobility}: In this model, each node $i$ has a \emph{unique} home point $i_h$, and moves around the home point within a disk of radius $r_c$ uniformly and randomly. The home points are fixed, independently and uniformly distributed on $\Omega$.

\section{Mobile Conductance}\label{def}
\noindent Conductance essentially determines the static network bottleneck in information spreading \cite{Conduct}. Node movement introduces dynamics into the network structure, thus can facilitate the information flows. In this work we define a new metric, \emph{mobile conductance}, to measure and quantify such improvement.


\emph{Definition:}
The mobile conductance of a stationary Markovian evolving graph with transition distribution $Q$ is defined as:

\begin{align}
 {\Phi _m}\left( Q \right)
&\triangleq \mathop {\min }\limits_{ \left| {S'\left( t \right)} \right| \leq n/2}  \left\{  E_{Q} \left( {\frac{{\sum\limits_{ i \in S'\left( t \right), j \in \overline {S'\left( t \right)} } {{P_{ij}}\left( {t + 1}\right)} }}{{\left| {S'\left( t \right)} \right|}}} \right) \right\} \label{mobile-conductance}\\ \label{mobile-conductance-approx}
&\overset{\text{(uniform)}}{\doteq} \mathop {\min }\limits_{ \left| {S'\left( t \right)} \right| \leq n/2}  \left\{ {\frac{{P\left(n, r \right)}}{{\left| {S'\left( t \right)} \right|}}E_{Q} \left[ {{N_{S'}}\left( {t + 1} \right)} \right]} \right\},
\end{align}
where ${S'\left( t \right)}$ is an arbitrary node set with size no larger than $n/2$, $P\left(n, r \right)$ is the common contact probability (in the order sense) for a RGG, and $N_{S'}\left( t+1 \right)$ is the number of connecting edges between ${S'\left( t \right)}$ and $\overline {S'\left( t \right)}$ after the move.


\emph{Remarks:}
\emph{1)} Some explanations for this concept are in order. Similar to its static counterpart, we examine the cut-volume ratio for an arbitrary node set ${S'\left( t \right)}$ at the beginning of time slot $t$. Different from the static case, due to the node motion ($X_i(t)\rightarrow X_i(t+1)$ in Fig. \ref{Move-and-Gossip}), the cut structure (and the corresponding contact probabilities $\{P_{ij}(t)\}$) changes. Thanks to the stationary Markovian assumption, its expected value (conditioned on ${S'\left( t \right)}$) is well defined with respect to the transition distribution $Q$. Minimization over the choice of ${S'\left( t \right)}$ essentially determines the bottleneck of information flow in the mobile setting.

\emph{2)}
For a RGG $G(n,r)$, the stochastic matrix $P(t)=\left[P_{ij}(t)\right]_{i,j=1}^{n}$ changes over time (in terms of connections) governed by the transition distribution $Q$ of the stationary Markovian moving process,  but the values of non-zero $P_{ij}(t+1)$'s remain on the same order given that nodes are uniformly distributed, denoted as $P(n,r)=\Theta\left(\frac{1}{{n\pi {r^2}}}\right)$. This allows us to focus on evaluating the number of connecting edges between ${S'\left( t \right)}$ and $\overline {S'\left( t \right)}$ \emph{after} the move: ${N_{S'}}\left( t+1 \right)= \sum_{i \in S'\left( t \right),j \in \overline {S'\left( t \right)} } {{1_{ij}}\left( j \in {\cal N}_i(t+1) \right)}$.\footnote{${{1_{ij}}\left( j \in {\cal N}_i(t+1) \right)}$ is the indicator function for the event that node $i$ and $j$ become neighbors after the move and before the gossip step in slot $t$.} Therefore for network graphs where nodes keep uniform  distribution over the time, mobile conductance admits a simpler expression (\ref{mobile-conductance-approx}).

\emph{3)} This definition may naturally be extended to the counterpart of $k$-conductance in \cite{Gossip}, with the set size constraint of $n/2$ in (\ref{mobile-conductance}) replaced by $k$, to facilitate the study of multi-piece information spreading in mobile networks.

\section{Mobile Spreading Time}\label{main}

\noindent The metric of interest for information dissemination is the $\epsilon$-spreading time, defined as:
\begin{equation}\label{spreading-time}
 T_{\mathrm{spr}} (\epsilon)
 \triangleq  \mathop {\sup}\limits_{s \in
V} \inf \left\{ {t:\Pr \left( {\left|S\left( t \right)\right| \ne n\left|
{S\left( 0 \right) = \left\{ s \right\}} \right.} \right) \le
\epsilon } \right\}.
\end{equation}
Based on the definition of mobile conductance, we have been able to obtain a general result for information spreading time in mobile networks.

\begin{theorem}\label{theorem-mobile-spreading-time}
For a mobile network with mobile conductance $\Phi_m(Q)$,
the mobile spreading time scales as
\begin{equation}\label{mobile-spreading-time}
T_{\mathrm{spr}}( \epsilon,Q) = O\left( {\frac{{\log n + \log
\epsilon ^{ - 1} }}{{\Phi _m(Q) }}} \right).
\end{equation}
\end{theorem}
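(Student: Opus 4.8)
The plan is to transpose the conductance-based analysis of randomized rumor spreading in static graphs (see \cite{Gossip} and the references therein) to the move-and-gossip process, the essential new point being that the conductance-type bound must be applied to the contact graph that exists \emph{after} the move in each slot.

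\textbf{Step 1: a per-slot progress lemma.} The core is to exhibit a universal constant $c>0$ such that, conditioned on the history $\mathcal{F}_t$ up to the start of slot $t$: if $1 \le |S(t)| \le n/2$ then $E[\,|S(t+1)| \mid \mathcal{F}_t\,] \ge (1 + c\,\Phi_m(Q))\,|S(t)|$, and if $|\overline{S(t)}| \le n/2$ then $E[\,|\overline{S(t+1)}| \mid \mathcal{F}_t\,] \le (1 - c\,\Phi_m(Q))\,|\overline{S(t)}|$. Three ingredients go into this. (i) Since every node's chain is stationary, after the move the positions are distributed exactly as in the definition of $\Phi_m(Q)$; applying that definition to whichever of $S(t)$ or $\overline{S(t)}$ is the smaller side lower-bounds, in expectation, the corresponding cut-sum of contact probabilities by $\Phi_m(Q)$ times the size of that side. (ii) Because each node contacts one uniformly random new neighbor, the cut-sum out of $S(t)$ is exactly the expected number of cross ``push'' contacts, while the cut-sum out of $\overline{S(t)}$ is exactly the expected number of uninformed nodes that ``pull'' from an informed neighbor — and each such pull converts its node, with no collisions, so on the uninformed side the contraction follows at once. (iii) On the informed side one must discount the push mass for collisions (several informed nodes targeting the same uninformed node); here the uniform node distribution is used, since the non-zero contact probabilities are $\Theta(1/(n\pi r^2))$ while degrees are $\Theta(n\pi r^2)$, so every uninformed node receives only $O(1)$ expected push requests, whence $1 - \prod_i (1-P_{ij}) = \Omega(\sum_i P_{ij})$ and the constant $c$ survives.

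\textbf{Step 2: two-phase amplification.} Split the run into Phase~A (while $|S(t)| \le n/2$) and Phase~B (while $|\overline{S(t)}| \le n/2$). For Phase~B the lemma gives a genuine one-step contraction, so $E[\,|\overline{S(t)}|\,] \le (1-c\Phi_m)^{\,t-t_0}\,n$ and Markov's inequality yields $\Pr[\,|\overline{S(t)}| \ge 1\,] \le \epsilon/2$ as soon as $t - t_0 = O((\log n + \log\epsilon^{-1})/\Phi_m)$. Phase~A is the delicate one, since $|S(t)|$ grows only in expectation, with no contraction; I would partition it into the $O(\log n)$ epochs on which $|S(t)| \in [2^k, 2^{k+1})$, show by optional stopping that each epoch lasts $O(1/\Phi_m)$ slots in expectation (each slot advances $|S(t)|$ by at least $c\Phi_m 2^k$ in expectation while at most $2^{k+1}$ total advance fits inside the epoch), and then convert these per-epoch expectations into an exponential tail on the total Phase~A length — each epoch's duration being stochastically dominated by a geometric variable with success probability $\Omega(\Phi_m)$ — which produces both the $\log n$ and the $\log\epsilon^{-1}$ terms. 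Adding the two phase lengths, and noting the bound is uniform in the source $s$ so that the $\sup_{s\in V}$ is free, gives $T_{\mathrm{spr}}(\epsilon,Q) = O((\log n + \log\epsilon^{-1})/\Phi_m(Q))$.

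\textbf{Main obstacle.} Two points need care. First, the conditioning in Step~1: the informed set $S(t)$ is correlated with the node positions at the start of the slot, so invoking the ``$\min$ over all sets $S'$'' definition of $\Phi_m(Q)$ slot by slot is not automatic — it is precisely here that the independence of the Markov moves from the gossip coins, the stationarity (uniformity) of the position marginals, and the concentration of the RGG cut sizes around their means must be combined. Second, as in the static conductance proofs, the Phase~A concentration — upgrading in-expectation multiplicative growth into a high-probability slot count — is the real technical work; mobility actually helps, because the move re-randomizes the contact graph every slot, but the small-$|S(t)|$ regime, where $\Phi_m |S(t)| \ll 1$, still forces the epoch-by-epoch bookkeeping sketched above.
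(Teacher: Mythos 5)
Your proposal follows essentially the same route as the paper's own sketch: a two-phase (push then pull) argument whose core is the per-slot bound $E[\,|S(t+1)|-|S(t)|\mid S(t)\,]\ge \tfrac{1}{2}|S(t)|\,\Phi_m(Q)$, obtained from $1-\prod_{i\in S(t)}(1-P_{ij})=\Omega(\sum_i P_{ij})$ and the minimization in the definition of $\Phi_m(Q)$, followed by the standard static-conductance amplification from \cite{Gossip}. Your Step 2 and the conditioning caveat are simply more explicit versions of what the paper compresses into ``follow the same lines in the rest part of the proof.''
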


\begin{proof}[Sketch of Proof]
We follow the standard procedure of the static counterpart (e.g. in \cite{Gossip}), with suitable modifications to account for the difference between static and mobile networks. Starting with $|S(0)|=1$, the message set $S(t)$ monotonically grows through the information spreading process, till the time $|S(t)|=n$ which we want to determine. The main idea is to find a good lower bound on the expected increment $|S(t+1)|-|S(t)|$ at each slot. It turns out that such a lower bound is well determined by the conductance of the network. Since the conductance is defined with respect to sets of size no larger than $n/2$, a two-phase strategy is adopted, where the first phase stops at $|S(t)|\leq n/2$. In the first phase, only the ``push" operation is considered for nodes in $S(t)$ (thus the upper bound on the spreading time is safe); while in the second phase, the emphasis is switched to the ``pull" operation of the nodes in $\overline {S(t)}$ (whose size is no larger than $n/2$). Since these two phases are symmetric, we will only focus on the first one.

In the first phase, for each node $j \in \overline {S\left(t\right)}$, define a random variable $\Delta_j\left(t\right)$. If at least one node with the message moves to the $j$'s neighboring area in slot $t$ and ``pushes" the message to $j$ in the gossip step, one new member is added to the message set. We let $\Delta_j{\left( {t + 1} \right)}=1$ in this case, and $0$ otherwise. In the following, we will evaluate the expected increment $|S(t+1)|-|S(t)|$ conditioned on $S(t)$. The key difference between the static and mobile case is that, there is an additional move step in each slot; therefore, the expectation is evaluated with respect to both the moving and gossip process. This is where our newly defined metric, mobile conductance, enters the scene and takes place of the static conductance. Specifically, due to the independent actions of nodes in $S(t)$ after the move, we have
\vspace{-0mm}
\begin{align*}
E\left[ {{\Delta_j}\left( {t + 1} \right)\left| {S\left( t \right)} \right.} \right]
=& E_Q \left[ {1 - \prod\limits_{i \in S\left( t \right)} {\left( {1 - {P_{ij}}\left( {t + 1} \right)} \right)} } \right] \\
\ge &E_Q \left[ {1 - \prod\limits_{i \in S\left( t \right)} {\exp \left( { - {P_{ij}}\left( {t + 1} \right)} \right)} } \right] \\
\ge& \frac{1}{2}E_Q \left[ {\sum\limits_{i \in S\left( t \right)} {{P_{ij}}\left( {t + 1} \right)} } \right],
\end{align*}

\noindent where the first and the second inequalities are due to the facts of $1 - x < \exp \left( { - x} \right)$ for $x \ge 0$ and $1 - \exp \left( { - x} \right) \ge \frac{x}{2}$ for $0 \le x \le 1$, respectively. Then
%
\begin{align}\label{set-increase}
&E\left[ {\left| {S\left( {t + 1} \right)} \right| - \left| {S\left( t \right)} \right|\left| {S\left( t \right)} \right.} \right]
=\sum\limits_{j \in \overline {S\left( t \right)} } {E\left[ {{\Delta_j}\left( {t + 1} \right)\left| {S\left( t \right)} \right.} \right]}\nonumber \\
\ge &\frac{1}{2}E_Q \left[ {\sum\limits_{i \in  {S\left( t \right)} ,j \in \overline {S\left( t \right)} } {{P_{ij}}\left( {t + 1} \right)} } \right]\nonumber \\
= &\frac{{\left| {S\left( t \right)} \right|}}{2} E_{Q} \left( {\frac{{\sum\limits_{ i \in S\left( t \right), j \in \overline {S\left( t \right)} } {{P_{ij}}\left( {t + 1}\right)} }}{{\left| {S\left( t \right)} \right|}}} \right)\nonumber \\
\ge &\frac{{\left| {S\left( t \right)} \right|}}{2}\mathop {\min }\limits_{ \left| {S'\left( t \right)} \right| \leq n/2}  \left\{  E_{Q} \left( {\frac{{\sum\limits_{ i \in S'\left( t \right), j \in \overline {S'\left( t \right)} } {{P_{ij}}\left( {t + 1}\right)} }}{{\left| {S'\left( t \right)} \right|}}} \right) \right\}\nonumber \\
= &\frac{{\left| {S\left( t \right)} \right|}}{2}{\Phi _m}\left( Q \right).
\end{align}
The form of \eqref{set-increase} is consistent with the counterpart in static networks \cite{Gossip}. Therefore, we can follow the same lines in the rest part of the proof.
\end{proof}

\section{Application on Specific Mobility Models}\label{app}
\noindent In the interest of space, the concept of mobile conductance is instantiated only through two mobility models in this section, and some less important technical details are omitted. The interested reader is referred to \cite{tech-report} for more details and results.

We will assume that the network instances follow the RGG model for concreteness, and evaluate (\ref{mobile-conductance-approx}). The main efforts in evaluation lie in finding the bottleneck segmentation (i.e., one that achieves the minimum in (\ref{mobile-conductance-approx})), and determining the expected number of connecting edges between the two resulting sets.
It is known \cite{Conduct} that for a static RGG $G(n,r)$, the bottleneck segmentation
is a bisection of the unit square, when $n$ is sufficiently large.
Intuitively, mobility offers the opportunity to escape from any bottleneck structure of the static network, and hence facilitates the spreading of the information. As will be shown below, fully random mobility destroys such a bottleneck structure, in that ${S'\left( t \right)}$ and $\overline {S'\left( t \right)}$ are fully mixed after the move; this move yields mobile conductance of $\Theta(1)$, a dramatic increase from static conductance  $\Theta \left( r \right)=\Theta \left( {\sqrt {\frac{{\log n}}{n}}} \right)$ \cite{Conduct}. Even for the more realistic velocity constrained model, part of the nodes from ${S'\left( t \right)}$ and $\overline {S'\left( t \right)}$ cross the boundary after the move and the connecting edges between the two sets are increased. The width of this contact region is proportional to $v_{\max}+r$.

\subsection{Fully Random Mobility}
\begin{theorem}\label{fully-random}
In fully random mobile networks, the mobile conductance scales as $\Theta \left( 1 \right)$.
\end{theorem}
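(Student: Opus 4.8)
\emph{Proof plan.} The plan is to substitute the fully random mobility model directly into the simplified expression (\ref{mobile-conductance-approx}) and observe that it collapses to a constant. The key structural fact is that under fully random mobility the post-move positions $X_1(t+1),\dots,X_n(t+1)$ are i.i.d.\ uniform on $\Omega$ and, crucially, \emph{independent} of the set $S'(t)$, since $S'(t)$ is a function of the configuration at the \emph{beginning} of slot $t$. Hence, fixing any candidate set $S'(t)$ with $|S'(t)|=s\le n/2$, each potential cross-pair $i\in S'(t)$, $j\in\overline{S'(t)}$ becomes an edge after the move with one and the same probability $p_r:=\Pr(\|U-U'\|\le r)$, where $U,U'$ are independent uniform points on the unit square. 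By exchangeability of i.i.d.\ points this makes $E_Q[N_{S'}(t+1)]$ depend on $S'(t)$ only through its size $s$ — the formal content of ``fully random mobility destroys the cut.''

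First I would record the elementary RGG estimate $p_r=\Theta(\pi r^2)$: the leading term is the disk area $\pi r^2$, with an $O(r^3)$ boundary correction that is negligible in the connectivity regime $v_{\max}+r=\Omega(\sqrt{\log n/n})$ assumed throughout. Linearity of expectation then gives $E_Q[N_{S'}(t+1)]=s(n-s)\,p_r$. Substituting this together with $P(n,r)=\Theta(1/(n\pi r^2))$ into (\ref{mobile-conductance-approx}) yields, uniformly over all $S'(t)$ of size $s$,
\[
\frac{P(n,r)}{|S'(t)|}\,E_Q[N_{S'}(t+1)]=P(n,r)\,(n-s)\,p_r=\Theta\!\left(\frac{n-s}{n}\right).
\]

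Finally, since the right-hand side depends on $S'(t)$ only through $s$, the minimization over $|S'(t)|\le n/2$ reduces to minimizing $\Theta((n-s)/n)$ over $1\le s\le n/2$; the quantity is decreasing in $s$, so the bottleneck is the bisection $s=n/2$, which still gives $\Theta(1/2)=\Omega(1)$, while the matching upper bound $O(1)$ follows from any $s$ (e.g.\ $s=1$). Combining the two one-sided bounds gives $\Phi_m(Q)=\Theta(1)$.

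I do not expect a serious obstacle; the argument is short precisely because the post-move contact count $s(n-s)p_r$ is oblivious to the geometry of $S'(t)$, so no adversarial set can drive the conductance below a constant. The only points needing a little care are (i) the uniform two-sided estimate $p_r=\Theta(\pi r^2)$ including boundary effects, and (ii) the passage from (\ref{mobile-conductance}) to (\ref{mobile-conductance-approx}), i.e.\ replacing the random normalizations $P_{ij}(t+1)=1/|{\cal N}_i(t+1)|$ by their common order $P(n,r)$ — legitimate in the stated density regime where node degrees concentrate around $n\pi r^2$, and already asserted in Remark~2.
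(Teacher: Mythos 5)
Your proposal is correct and follows essentially the same route as the paper's own proof: both compute $E_Q[N_{S'}(t+1)] = |S'|\,|\overline{S'}|\,\pi r^2$ from the uniform post-move distribution, substitute $P(n,r)=\Theta(1/(n\pi r^2))$, and observe that the resulting $\Theta((n-|S'|)/n)$ is $\Theta(1)$ uniformly over all admissible cuts. Your added remarks on exchangeability, boundary corrections, and the explicit minimization over $s$ are just a more careful rendering of the same argument.
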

\begin{proof}[Sketch of Proof]
 Since this mobility model is memoryless, for an arbitrary $S'\left( t
\right)$, the nodes in both $S'\left( t \right)$ and $\overline
{S'\left( t \right)}$ are uniformly distributed after the move, with density $|S'\left( t \right)|$ and $|\overline
{S'\left( t \right)}|$ respectively.
For each node in $S'\left( t \right)$, the size of its neighborhood
area is $\pi {r^2}$. Since each node contacts only one node in its radius, the expected number of contact pairs is
\begin{align}
 E_{Q} \left[ {{N_{S'}}\left( {t + 1} \right)} \right]
 = \left| {S'\left( t \right)} \right|\left| {\overline {S'\left( t \right)} } \right|\pi {r^2}.
\end{align}
Noting that
\begin{equation*}\label{full-mobility-conductance}
{\frac{{P\left(n, r \right)}}{{\left| {S'\left( t \right)}
\right|}}E_Q \left[
{{N_{S'}}\left( {t + 1} \right)} \right]}
 = \Theta \left( 1 \right),
\end{equation*}
regardless of the choice of $S'(t)$ (with size no larger than $n/2$) and $r$, we have $\Phi _m=\Theta(1)$. There is no bottleneck segmentation in this mobility model.
\end{proof}

\emph{Remarks: }In the gossip algorithms, only the nodes with the message can contribute to the increment of $\left| {S\left( t \right)} \right|$. Consider the ideal case that each node with the message contacts a node without the message in each step, which represents the fastest possible information spreading. A straightforward calculation \cite{tech-report} reveals that
$T_{\mathrm{spr}}(\epsilon) = \Omega \left( {\log n} \right)$ for an
arbitrary constant $\epsilon$. Theorem \ref{fully-random} indicates that in the fully random model, the corresponding mobile spreading time scales as $O\left( {\log n} \right)$ (when $\epsilon=O(1/n)$), so the optimal performance in
information spreading is achieved. The
potential improvement on information spreading time due to mobility
is dramatic: from $\Theta \left( {\sqrt n } \right)$ \cite{Gossip} to $\Theta
\left( \log n \right)$.

\subsection{Velocity Constrained Mobility}

\begin{theorem}\label{theorem-velocity}
For the mobility model with velocity constraint $v_{\max}$, the mobile conductance scales as $\Theta \left( \max \left( v_{\max},r \right) \right) $.
\end{theorem}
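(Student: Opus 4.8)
The plan is to evaluate the simplified expression~(\ref{mobile-conductance-approx}) directly, which amounts to two tasks: exhibiting one admissible set $S'(t)$ (size $\le n/2$) on which the bracketed quantity is $O(\max(v_{\max},r))$, for the upper bound on $\Phi_m$; and showing that \emph{every} admissible $S'(t)$ makes it $\Omega(\max(v_{\max},r))$, for the matching lower bound. Write $w\triangleq\max(v_{\max},r)$, and note that connectivity under mobility forces $v_{\max}+r=\Omega(\sqrt{\log n/n})$, so $w=\Omega(\sqrt{\log n/n})$ lies above the RGG connectivity threshold. The technical heart of both directions is a two-sided estimate of the post-move contact probability of a pair of nodes. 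After the move step $X_i(t+1)$ is uniform on the disk of radius $v_{\max}$ about $X_i(t)$, independently across $i$, so $X_i(t+1)-X_j(t+1)$ is a difference of two independent uniform-disk variables, with a density supported on the disk of radius $2v_{\max}$ and of order $v_{\max}^{-2}$ on the concentric disk of radius $\tfrac32 v_{\max}$. From this one derives that $\Pr_Q(|X_i(t+1)-X_j(t+1)|\le r)$ is (a) $\Theta(r^2/w^2)$ whenever $|X_i(t)-X_j(t)|\le c_0 w$ for a suitable absolute constant $c_0$, (b) $O(r^2/w^2)$ unconditionally, and (c) exactly $0$ once $|X_i(t)-X_j(t)|>2v_{\max}+r$. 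For the nontrivial lower half of (a) I would split cases: if $r\ge v_{\max}$, force each displacement into a ball of radius $r/4$ (each event has constant probability since the displacement radius is $v_{\max}\le r$), so two nodes at distance $\le r/2$ end up within $r$; if $r<v_{\max}$, use that $B(0,r)$ shifted by $|X_i(t)-X_j(t)|\le\tfrac12 v_{\max}$ stays inside the radius-$\tfrac32 v_{\max}$ disk where the difference density is $\asymp v_{\max}^{-2}$.

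Summing this estimate over all pairs $i\in S'(t)$, $j\in\overline{S'(t)}$ shows that $E_Q[N_{S'}(t+1)]$ is, up to constants (and up to using the radius $c_0 w$ for the lower bound and $2v_{\max}+r$ for the upper bound, both of order $w$), equal to $\tfrac{r^2}{w^2}$ times the edge cut of $S'(t)$ in the \emph{inflated} random geometric graph $G(n,\Theta(w))$ built on the \emph{same} node positions $\{X_i(t)\}$: the move effectively enlarges the transmission radius from $r$ to $\Theta(v_{\max}+r)$ while attenuating every potential edge by the factor $r^2/w^2$. Feeding this into~(\ref{mobile-conductance-approx}) with $P(n,r)=\Theta(1/(n r^2))$ gives, for every admissible $S'(t)$,
\[ \frac{P(n,r)}{|S'(t)|}\,E_Q\!\left[N_{S'}(t+1)\right] = \Theta\!\left(\frac{\mathrm{cut}_{G(n,\Theta(w))}(S'(t))}{n\,w^2\,|S'(t)|}\right), \]
which is exactly (up to constants) the static conductance ratio of $S'(t)$ in the inflated graph $G(n,\Theta(w))$.

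It then remains to invoke the known static isoperimetry of random geometric graphs: since $\Theta(w)$ lies above the connectivity threshold (and is $O(1)$), $G(n,\Theta(w))$ has conductance $\Theta(w)$ with high probability~\cite{Conduct}, the minimizing cut being a bisection of the unit square. Minimizing the displayed ratio over $S'(t)$ thus yields $\Phi_m=\Theta(w)=\Theta(\max(v_{\max},r))$; the lower bound \emph{is} this minimization, and the upper bound is certified by the same bisection, for which a direct count suffices (the width-$\Theta(w)$ strip around the cut holds $\Theta(nw)$ nodes from each side, each of the $\Theta(nw^2)$ opposite-side nodes within distance $\Theta(w)$ is connected after the move with probability $\Theta(r^2/w^2)$, so $E_Q[N_{S'}(t+1)]=\Theta(n^2 w r^2)$ and the ratio is $\Theta(w)$). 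Thus, unlike the fully random case, a bottleneck segmentation persists, but its two sides now mix within a boundary strip of width $\Theta(v_{\max}+r)$ instead of $\Theta(r)$, which is the whole source of the gain over the static conductance $\Theta(r)$.

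The step I expect to be the main obstacle is the two-sided post-move contact estimate together with its clean conversion into the inflated-graph cut: one must handle the regimes $v_{\max}\ll r$ and $v_{\max}\gg r$ on the same footing, control boundary effects near $\partial\Omega$ where the displacement disk is truncated or reflected (these affect only constants, and the extremal bisection can be placed away from the corners), and make the high-probability statements about $G(n,r)$ and about the inflated graph $G(n,\Theta(w))$ hold simultaneously. Once the reduction to a static conductance at the inflated radius is in hand, the remaining work is entirely outsourced to~\cite{Conduct}.
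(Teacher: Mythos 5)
Your proposal is correct in substance and reaches the right scaling, but it takes a genuinely different route from the paper. The paper's (sketched) argument fixes the candidate extremal set to be the bisection of the unit square (deferring to the technical report the claim that this remains the bottleneck under velocity-constrained motion), computes the explicit post-move marginal densities $\rho'_{S'(t)}(l)$ and $\rho'_{\overline{S'(t)}}(l)$ across the width-$2v_{\max}$ mixing strip, and evaluates the closed-form double integral \eqref{calculus} to get the expected cut, yielding the explicit approximation \eqref{Velocity-conduct} whose two branches give $\Theta(\max(v_{\max},r))$. You instead prove a two-sided pairwise post-move contact estimate ($\Theta(r^2/w^2)$ for pairs initially within $\Theta(w)$, $O(r^2/w^2)$ always, $0$ beyond $2v_{\max}+r$) and use it to sandwich $E_Q[N_{S'}(t+1)]$ for an \emph{arbitrary} admissible $S'(t)$ between constant multiples of $\tfrac{r^2}{w^2}$ times the cut of $S'(t)$ in an inflated RGG $G(n,\Theta(w))$, then outsource the minimization to the known static conductance $\Theta(w)$ of that graph. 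What your approach buys is a clean treatment of the lower bound (the minimization over all $S'(t)$), which the paper only asserts; what it gives up is the explicit constants of \eqref{Velocity-conduct}. Two points to nail down if you write this out: the inflation radii in your two directions differ ($c_0 w$ versus $2v_{\max}+r$), which is harmless only because the static RGG conductance is $\Theta(R)$ uniformly for all radii of order $w$ above the connectivity threshold, so you should state the cited isoperimetry in that uniform form; and the boundary truncation of the displacement disk near $\partial\Omega$ slightly perturbs the stationary density, which you correctly flag as a constant-factor issue but which also needs a sentence when you claim the post-move positions still form (order-wise) a uniform RGG.
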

\begin{proof}[Sketch of Proof]
As argued in \cite{tech-report}, for the velocity constrained mobility model, the bottleneck segmentation is still the bisection of the unit square as shown in the upper plot of Fig.~\ref{velocity}, with $S'( t )$ on the left and $\overline {S'( t )}$ on the right before the move in time slot $t$. For better illustration, darkness of the regions in the figure represents the density of nodes that belong to
$S'( t )$. We can see that after the move, with
some nodes in both $S'( t )$ and $\overline {S'( t
)}$ crossing the border to enter the other half, a mixture
strip of width $2 \times v_{\max }$ emerges in the middle of the
graph.

We take the center of the graph as the origin. Denote $\rho _{S'( t )} ( l )$ and $\rho
_{\overline {S'( t )} } ( l )$ as the density
of nodes before moving, and $\rho
'_{S'( t )} ( l )$ and $\rho '_{\overline
{S'( t )} } ( l )$ as the density of nodes
after moving, with $l$ the horizontal coordinate\footnote{The node distributions are uniform in the vertical direction.}.
After some derivation, we have

{\footnotesize {\begin{equation*}
{\frac{{{{\rho '}_{S'(t)}}\left( l \right)}}{n} = \left\{ {\begin{array}{{ll}}
{1,} & {l <  - {v_{\max }}},\\
\begin{array}{l}
\arccos \left( {\frac{l}{{{v_{\max }}}}} \right)\\
 - \frac{l}{{{v_{\max }}}}\sin \left( {\arccos \frac{l}{{{v_{\max }}}}} \right),
\end{array} & {- {v_{\max }} < l < {v_{\max }}},\\
{0,} & {l > {v_{\max }}},
\end{array}} \right.}\\
\end{equation*}}}
and
\begin{equation*}
\frac{{{{\rho '}_{\overline {S'(t)}}}\left( l \right)}}{n} = 1 - \frac{{{{\rho '}_{S'(t)}}\left( l \right)}}{n}.
\end{equation*}
\vspace{-0mm}
\begin{figure}[h] \centering
\includegraphics[width=0.25\textwidth]{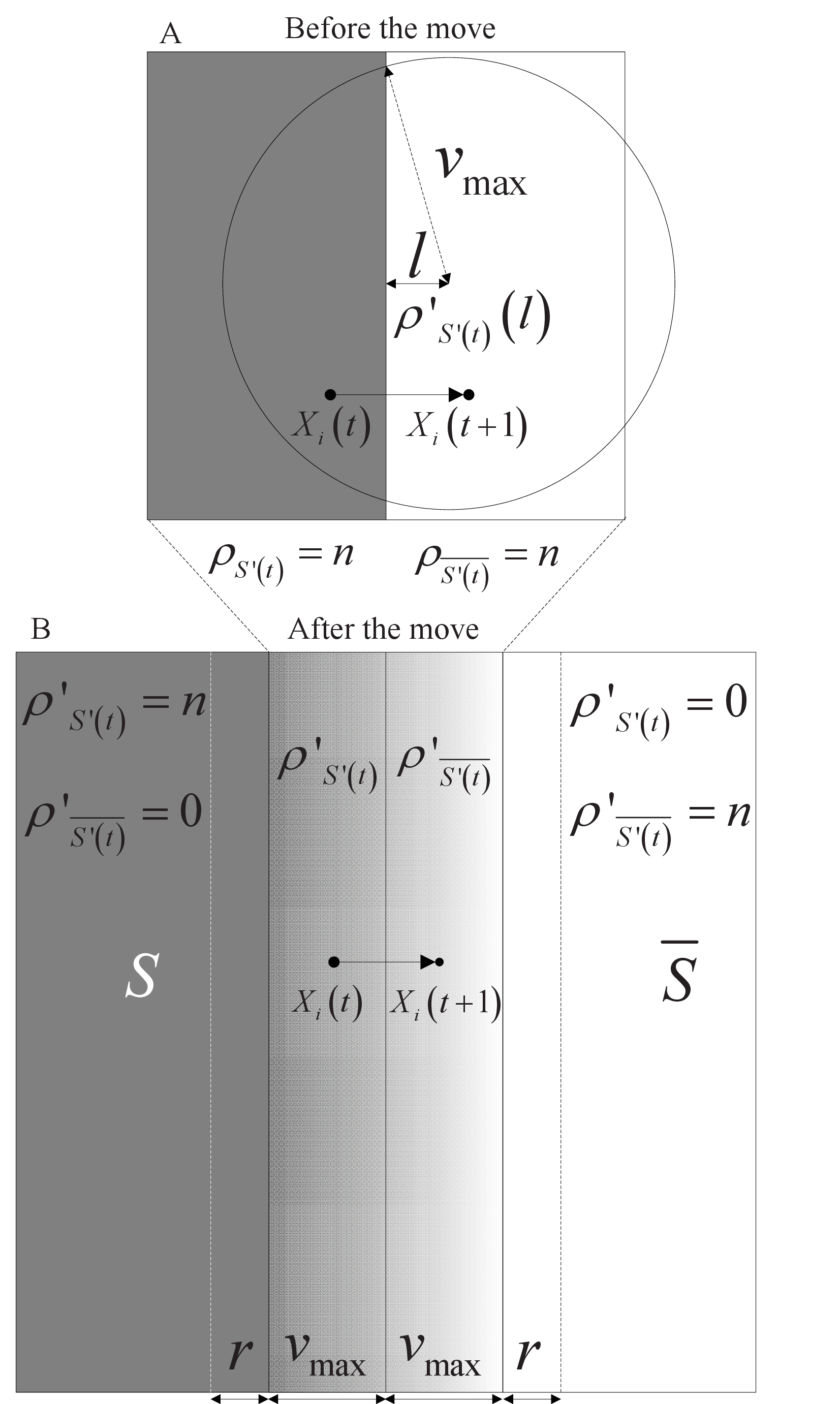}
\caption {Change of Node Densities Before and After the Move} \label{velocity}
\end{figure}

\vspace{-0mm}
\begin{figure}[h] \centering
\includegraphics[width=0.3\textwidth]{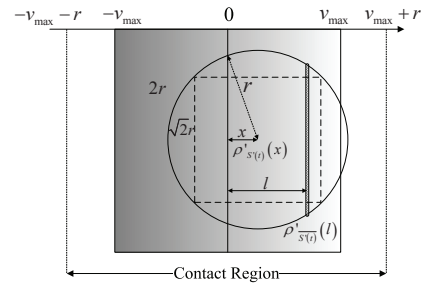}
\caption {Calculation of the Number of Contact Pairs} \label{velocity3}
\end{figure}

The contact pairs with the above bottleneck segmentation lie in the $2
\times \left( {v_{\max }  + r} \right)$ wide vertical strip in the center. All nodes outside this region will not contribute
to $N_{S'} \left( {t + 1} \right)$.
The number of contact pairs after the move can be calculated according to Fig.~\ref{velocity3}. The center of the circle with radius $r$ is $x$ away from the middle line. For node $i \in S'(t)$ located at the
center, the number of nodes that it can contact is equal to
the number of nodes belonging to $\overline {S'(t)}$ in the circle. Since the density of nodes belonging to $\overline {S'(t)}$ at positions $l$ away from the middle line
is ${\rho '_{\overline {S'(t)} } \left( l \right)}$, the number of
nodes that $i$ can contact is $\int\limits_{x - r}^{x
+ r} {\rho '_{\overline {S'(t)} } \left( l \right) 2 \sqrt {r^2  -
\left( {l - x} \right)^2 } dl}$. Taking all nodes belonging to $S'(t)$ in
the contact region into consideration, the expected number of
contact pairs after the move is
\begin{align}\label{calculus}
& E_Q \left[ {{N_{S'}}\left( {t + 1} \right)} \right]  \nonumber \\
  = & \int\limits_{ - v_{\max }  - r}^{v_{\max }  + r} {\rho '_{S'(t)} \left( x \right)\int\limits_{x - r}^{x + r} {\rho '_{\overline {S'(t)}} \left( l \right)2\sqrt {r^2  - \left( {l - x} \right)^2 } dldx} }.
 \end{align}
%
After some calculation\cite{tech-report}, the mobile conductance is well approximated by
\begin{align}\label{Velocity-conduct}
\Phi _m
\cong \left\{{\begin{array}{*{20}c}
  {\frac{1}{2}r + \frac{{v_{\max }^2 }}{{3r}},\quad \quad \quad \quad {\rm{for }}\quad v_{\max } \le \frac{1}{2} r }, \\
   { - \frac{{r^3 }}{{48v_{\max }^2 }} + \frac{{r^2 }}{{6v_{\max } }} + \frac{2}{3}v_{\max } ,\quad {\rm{for }}\quad v_{\max } > \frac{1}{2} r } . \\
\end{array}} \right.
\end{align}
\end{proof}

\emph{Remarks:} Theorem \ref{theorem-velocity} indicates that, when $v_{\max }=O(r)$, $\Phi _m = \Theta \left(
{r } \right)$, and the spreading time scales as $O(\log n/r)$, which degrades to the static case; when $v_{\max }=\omega(r)$, $\Phi _m= \Theta (v_{\max})$, and the spreading time scales as $O(\log n/v_{\max})$, which improves over the static case and approaches the optimum when $v_{\max}$ approaches $O(1)$. These observations are further verified through the simulation results below.

\section{Simulation Results}\label{simu}

\noindent We have conducted large-scale simulations to verify the correctness and accuracy of the derived theoretical results. In our simulation, up to 20,000 nodes are uniformly and randomly deployed on a unit square and move according to specified mobility models. The transmission radius $r\left( n \right)$ is set as $\sqrt {\frac{{8 \log n}}{\pi n}}$. For each curve, we simulate one thousand Monte-Carlo rounds and present the average.

The spreading time results for static networks and fully random mobile networks are shown in Fig.~\ref{Tspr_VelocityScaling} as the upper and lower bounds.
In particular, the bottommost curve (fully random mobility) grows in a trend of $\log n$ (note that the x-axis is on the log-scale), which confirms \emph{Theorem \ref{fully-random}}.
Fig.~\ref{Tspr_VelocityScaling} also confirms our remarks on \emph{Theorem \ref{theorem-velocity}}. When $v_{\max}=0.1$, the corresponding curve exhibits a slope almost identical to that for the fully random model. We also observe that $v_{\max}=\Theta \left( r \right)$ is a breaking point: lower velocity ($v_{\max}=o(r)=\Theta \left(\sqrt{\frac{1}{n}}\right)$) leads to a performance similar to the static case.

\vspace{-0mm}
\begin{figure}[h] \centering
\includegraphics[width=0.39\textwidth]{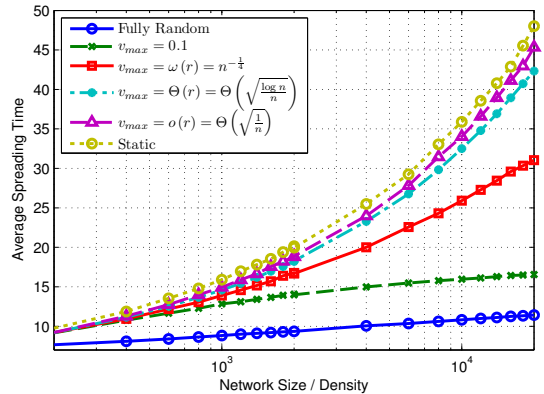}
\caption {Spreading Time under the Velocity-Constrained Mobility Model}
\label{Tspr_VelocityScaling}
\end{figure}

\section{Conclusions and Future Work}\label{conclusion}

\noindent In this paper, we analyze information spreading in
mobile networks, based on the proposed move-and-gossip information spreading model. For a dynamic graph that is connected under mobility, i.e., $v_{\max}+r=\Omega(\sqrt{\log n/n})$, we have derived a general expression for the information spreading time by gossip algorithms in terms of the newly defined metric mobile conductance, and shown that mobility can significantly speed up information spreading. This common framework facilitates the investigation and comparison of different mobility patterns and their effects on information dissemination.

In our current definition of mobile conductance, it is assumed that in each step, there exist some contact pairs between ${S'\left( t \right)}$ and $\overline {S'\left( t \right)}$ after the move. In extremely sparse networks (depending on the node density and transmission radius), we may have $E_{Q} \left[ {{N_{S'}}\left( {t + 1} \right)} \right]=0$. Let $T_{m}(i,j)\triangleq \inf \{t: j \in {\cal N}_i(t+1) \}$ be the first meeting time of nodes $i$ and $j$. We plan to extend the definition of mobile conductance to the scenario with $E[T_{m}(i,j)]<\infty$.

%

\end{document}